\documentclass[12pt]{article}

\usepackage{amsmath}
\usepackage{amssymb}
\usepackage{amsthm}

\usepackage{slashed}
\usepackage{color}

\usepackage{graphicx}
\usepackage{hyperref}

\newtheorem{thm}{Theorem}
\newtheorem{lemma}{Lemma}
\newtheorem{prop}{Proposition}
\newtheorem{cor}{Corollary}

\newtheorem{conj}{Conjecture}

\hypersetup{
    pdfborder = {0 0 0},%
    colorlinks,%
    citecolor=blue,%
    filecolor=black,%
    linkcolor=red,%
    urlcolor=green
}

\begin{document}

\title{  Renormalization group flow, entropy and eigenvalues }
\author{\textsc{Dan Li}
         \thanks{ E-mail: li1863@math.purdue.edu }
	 \\
           Department of Mathematics, Purdue University, \\
		150 N. University St, West Lafayette, IN 47907\\
	keywords: renormalization group flow, thermal entropy, \\ variation of eigenvalues, 
	          AdS/CFT correspondence \\
	MSC(2010): 53C44, 58Z05, 81T17       
	    }

\date{}
\maketitle

\begin{abstract}
The irreversibility of the renormalization group flow is conjectured to be closely related to the concept of entropy. 
In this paper, the variation of eigenvalues of the Laplacian in the Polyakov action under the renormalization group flow will be studied. 
Based on the one-loop approximation
to the effective field theory,
 we will use the heat kernel method and zeta function regularization.
In even dimensions, the variation of eigenvalues is given by the top heat kernel coefficient, and the conformal anomaly is relevant.
In odd dimensions, we will conjecture a formula for the variation of eigenvalues through the holographic renormalization in the setting of geometric AdS/CFT correspondence.

\end{abstract}

\section{Introduction}\label{Intro}

The idea of renormalization group (RG) emerges from the renormalization process of a quantum field theory (QFT) by which, if possible, all divergent infinities 
are taken care of and the QFT is made into a renormalizable theory. As a prototypical example, when the infinity in the momentum space is cut off by a
large energy scale $\Lambda$, the coupling constants are now depending on the introduced renormalization parameter $\Lambda$ and the original QFT becomes 
an effective field theory. In other words, one obtains different effective field theories parametrized by the cutoff scale in the space of quantum field theories (i.e., theory space).

As the energy scale varies, for example from the ultraviolet (UV) regime to the infrared (IR) regime, the renormalization group forms a renormalization group flow, which could have several fixed points.
One basic assumption of the RG flow is that if the change of scales is small from high energy to low energy, for example $\Lambda \rightarrow \Lambda' = e^{-t}\Lambda $, 
then the corresponding effective field theories are scale invariant. 
In most cases, scale invariance can be enhanced to conformal invariance \cite{N15}, which has important applications in geometry if the RG flow is approximated by a geometric flow such as the Ricci flow \cite{MT07}. 
Of course, the fixed points, i.e., some conformal field theories (CFTs) in the theory space, play important roles in the study of the RG flow. Another important 
observation of the RG flow is its irreversibility, which is conjectured to be intimately related to the concept of entropy. For example, 
for any renormalizable 2d field theory, the Zamolodchikov $c$-theorem \cite{Z86} tells us that the central charge at the fixed points of the RG flow 
decreases monotonically, i.e., $c_{UV} > c_{IR}$.

The beta function of a coupling parameter $g(\Lambda)$  depending on the energy scale $\Lambda$ is defined as 
\begin{equation}
\beta(g) :=  \Lambda \frac{\partial g}{\partial  \Lambda} = \frac{\partial g}{\partial \ln \Lambda}
\end{equation}
which describes the running of the coupling parameter $g$ along with the renormalization scale $\Lambda$.
At a fixed point of the RG flow, the conformal invariance implies the vanishing of the beta function, i.e., $\beta(g) = 0$.
In metric geometry, the coupling parameter $g$ is always taken as the metric $g_{ij}$ of a Riemannian manifold.  

Friedan first considered the perturbative RG flow of a two-dimensional nonlinear sigma model, 
and computed the beta function of the metric $g_{ij}$ of the target space based on Feynman diagram expansion \cite{F80, F85}.
The one-loop approximation of the RG flow, i.e., the first term in the beta function, gives rise to the Ricci flow,
\begin{equation}\label{RicFlow}
   \frac{\partial g_{ij}}{\partial t} = - 2R_{ij}
\end{equation}
where  $t = \ln (\Lambda^{-1})$ is the ``time'' parameter and $R_{ij}$ is the Ricci curvature.
As a remark, the parameter $t$ should not be confused with the physical time.
The Ricci flow equation \eqref{RicFlow},  which was first introduced into geometric analysis by Hamilton, is always viewed as the geometric analogy of the diffusive heat equation.
More recently, Perelman introduced the $W$-functional under the Ricci flow and proved the no local collapsing theorem
 in the resolution of the Poincar{\'e} conjecture and Thurston's geometrization conjecture \cite{MT07}. As pointed out by Perelman in \cite{P02},
 the definition of  $W$-functional was inspired by the concept of entropy in statistical mechanics, which is related to the 
 RG flow heuristically.
 
The variation of eigenvalues of geometric operators such as the Laplacian under the Ricci flow has been studied in the literature, 
 see \cite{ C08, D07, L07, M06}, 
and the monotonicity of eigenvalues  has been established. The idea behind such monotonicity is that
from the entropy functional such as the $F$-functional or $W$-functional by Perelman, eigenvalues of the corresponding geometric operators
inherit the monotonicity from the entropy functional. And such entropy functional originates from the thermal entropy in statistical mechanics, for example see \cite{L12}.

In this paper, we will first consider the variation of eigenvalues of the Laplacian in the Polyakov action under the RG flow
and express it in terms of spectral invariants, i.e., heat kernel coefficients and spectral zeta function.
In mathematical physics, the top heat kernel coefficient plays an important role, 
for example, in explanations of the conformal  anomaly \cite{V03}.
It is not a surprise that the variation of eigenvalues equals the top heat coefficient $a_{2k}$ in even dimensions, which is closely related to the conformal anomaly \cite{C88}. In general, there is no monotonicity
of eigenvalues or equivalently positivity of $a_{2k}$, since $a_{2k}$ consists of different parts expressed in scalar curvature, Ricci curvature and Riemann curvature tensor.
But sometimes it is possible to extract a positive or negative component such as the Euler density, the Weyl tensor or the Q-curvature so that the variation
of eigenvalues has a positive or negative sign under some mild conditions on the Riemannian manifold. For example, the 2d c-theorem \cite{Z86} and 4d a-theorem \cite{KS11} are important evidence of the existence of C-function, 
which is constructed to study the irreversibility of the RG flow. 


In odd dimensions, the situation is more subtle. We know that the odd heat kernel coefficients $a_{2i+1}$ vanish (for a closed manifold), and there is no conformal anomaly in odd-dimensional CFTs. 
From classical statistical mechanics, we learned that the partition function and the free energy can be used interchangeably to express the probability function.
If the $(2k+1)$-dimensional bulk theory is characterized by the free energy, then 
the $2k$-dimensional boundary theory should be described by the partition function.  For instance, in 3d there is an $F$-theorem \cite{CHM11}, where $F$ is the free energy,
and it is also viewed as an entanglement entropy. The AdS/CFT correspondence provides a geometric framework to study a duality equivalence between the bulk gravity theory and the boundary field theory \cite{GKP98, M98, W98}. 
The geometry of the AdS/CFT correspondence \cite{A05, PK05} allows us to translate  problems about the geometric operators from the bulk to some conformal invariants on the boundary. 
The idea of holographic renormalization \cite{DVV00, S02}, which links the radial coordinate to the energy scale, is developed to trade the IR divergence of the bulk gravity theory with the UV divergence of the boundary gauge theory.
So based on the holographic principle realized by the AdS/CFT correspondence, we conjecture a formula for the variation of eigenvalues of the bulk Laplacian in odd dimensions through the holographic RG flow.   


This paper is organized as follows. In Sect. \ref{Polact}, we will review some basic facts about the Polyakov action, perturbative RG flow and the heat kernel method.
The nonperturbative heat kernel approach will recover the first-order approximation to the RG flow, i.e., the Ricci flow.
In Sect. \ref{Varen}, we will derive the variation of eigenvalues of the Laplacian in the Polyakov action under the RG flow in even dimensions, which is given by the top heat kernel coefficient $a_{2k}$.
In Sect. \ref{Conjodd}, we will review the thermal entropy and the irreversibility of the RG flow, then a formula for the variation of eigenvalues of the bulk Laplacian in odd dimensions will be conjectured
based on the holographic principle. Finally, Sect. \ref{Disc} is some discussions about possible further investigations.

\section{Polyakov action}\label{Polact}

In this section, we first review some facts about the perturbative renormalization group (RG) flow and the heat kernel method. 
In the second part, we look into the Laplace operator derived from the Polyakov action and reproduce the Ricci flow from a heat kernel coefficient.  

\subsection{Perturbative RG flow}

In order to study the geometry of 
a string propagating in a curved background, one considers the nonlinear sigma model (or $\sigma$-model) defined by the Polyakov action, which is  
 the kinetic energy of a scalar field mapping a 2d surface $\Sigma$ 
into the target space $M$. More precisely, for a smooth embedding $\phi : (\Sigma, \gamma) \rightarrow (M, g)$,  
the Polyakov action is defined as  
\begin{equation}\label{SigmaMod}
       S( g; \phi) = \frac{T}{2}\int_\Sigma dv  \,\,\gamma^{\mu\nu}  \partial_\mu \phi^i (x) \partial_\nu \phi^j(x) g_{ij} (\phi(x))
\end{equation}
where $dv = \sqrt{|\gamma|} d\sigma$ is the standard volume measure on $\Sigma$. In string theory, the Polyakov action
describes a bosonic string model and 
the parameter $T$ is the string tension. Basically,
the Polyakov action computes the area of the string worldsheet in the curved target space. 
In the literature, the Regge slope $\alpha'$ is also widely used, and it is connected to $T$ 
by $ T = {1}/{2 \pi \alpha'} $. 
 A general nonlinear sigma model contains dilatons and B-fields etc.,
in this section we focus on the Polyakov action \eqref{SigmaMod}, i.e., the  free scalar field theory.

 For simplicity, we assume that the smooth Riemann surface $(\Sigma, \gamma)$ and the $n$-dimensional ($n \geq 2$) Riemannian manifold $(M,g)$ 
 are oriented and  compact without boundary. The space of smooth maps is denoted by $C^\infty(\Sigma, M)$,
 and the scalar fields $\phi : \Sigma \rightarrow M $ are assumed to be smooth. The Polyakov action 
  $S(g; \phi)$ is also assumed to be invariant under the diffeomorphism group of $M$, i.e., $Diff(M)$. 
In addition, we only consider Euclidean quantum field theory (QFT), and the metric $g$ has Euclidean signature.

The RG flow of the nonlinear  sigma model \eqref{SigmaMod} was first investigated by Friedan \cite{F80, F85},
an approach based on dimensional regularization of Feynman diagram evaluation can be found in  \cite{G00}, 
and the geometric picture of the RG flow is explained  in \cite{C10}.

We briefly recall the idea of perturbative RG flow based on regularization and renormalization following \cite{C10}.
First one considers the effective action $W$ defined by 
\begin{equation}
  e^{-  W } = \int [D\phi] e^{- S( g; \phi)} 
\end{equation} 
where the path integral measure is formally defined over scalar fields modulo diffeomorphic equivalence.
When the parameter $T \rightarrow \infty$ or $\alpha' \rightarrow 0$, the effective action is roughly the classical action plus quantum 
fluctuations around the stationary points of the action $S( g; \phi)$.
If there exists a decomposition  $\phi= \phi_0 + \pi$ around a given extremizing field $\phi_0$ such that 
$\frac{\delta S}{\delta \phi}(\phi_0) = 0$ and $\pi \in C^\infty(\Sigma, \phi^*TM)$ is a small perturbation (or quantum fluctuation), 
then under the saddle point approximation
the local contribution to $W$ around  $\phi_0$ is approximated by
\begin{equation}
   W(\phi_0) \sim S(\phi_0) - ln \int [D\pi] exp\left(-\frac{1}{2} S''_{\mu\nu}(\phi_0) \pi^\mu \pi^\nu  \right)
\end{equation}
Using the normal geodesic coordinates around $\phi_0$, $S''_{\mu\nu}(\phi_0)$ is expanded out and the Riemann curvature 
$R_{ijkl}$ appears naturally in the covariant derivatives. Applying the Feynman rules to one-loop Feynman diagrams,  a momentum cutoff $\Lambda$ is introduced into 
the Feynman propagator. After this regularization, the divergent part of the field theory   
only lies in a term with the  logarithmic energy $\ln \Lambda$ multiplied by  the Ricci curvature $R_{ij}$. 
Finally, the renormalization procedure identifies
the Ricci curvature as  the lowest order term in the beta function.
More precisely, the first-order term in $\alpha'$ in the RG flow equation gives rise to the Ricci flow,
\begin{equation}
   \frac{d g_{ij}}{d t} =-2\alpha' R_{ij} + O(\alpha'^2)
\end{equation}
where  $t =\ln \Lambda^{-1}$ or  $\Lambda = e^{-t}$. 
 
The higher-order terms in the RG flow can be obtained similarly by evaluating Feynman diagrams with more loops, 
and the first two terms were computed in Friedan's thesis \cite{F85}, 
\begin{equation}
  \frac{d g_{ij}}{d t} = -\beta(g_{ij}) =-2\alpha' R_{ij}- {\alpha'^2}R_i^{\, klm}R_{jklm}+ O(\alpha'^3)
\end{equation}

The perturbative RG flow is conjectured to be a good approximation to the nonperturbative RG flow. 
In the next section, we will look at a nonperturbative approach based on the heat kernel method.
Because of the feature of heat kernel expansion, this approach only determines the
one-loop approximation to the effective field theory.

\subsection{Heat kernel expansion}

Suppose $(M, g)$ is a compact smooth Riemannian manifold without boundary, let $\Delta$ be the Laplace--Beltrami operator with positive eigenvalues
\begin{equation}
   0 <  \lambda_1 \leq \lambda_2 \leq \cdots \lambda_k \leq \cdots
\end{equation}
for simplicity, here the eigenvalue $\lambda_0 = 0$ is ignored. 
If the corresponding eigenfunctions are given by $\{ \psi_k\}$, which form an orthonormal basis for $L^2(M)$, 
then the heat kernel $K(t; x, y; \Delta)$ is defined as
\begin{equation}
   K(t; x, y; \Delta) = \sum_k e^{-\lambda_k t} \psi_k(x)\psi_k(y)
\end{equation}
The trace of the heat kernel, sometimes called the partition function $Z(t)$, is defined by 
\begin{equation}\label{Trheat}
  Z(t)= Tr(e^{-t\Delta}) = \sum_k e^{-t\lambda_k} = \int_M dV\, K(t; x, x; \Delta) = K(t, \Delta)
\end{equation}
which has an asymptotic expansion when $t$ tends to zero,
\begin{equation}
  Tr(e^{-t\Delta}) \sim  (4\pi t)^{-\frac{n}{2}}(a_0 + a_2t + a_4t^2 + \cdots )  \quad \text{as} \quad  t \rightarrow 0^+
\end{equation}
where $\dim M = n$ and the heat kernel coefficients $a_{2i}$ are integrals of local spectral invariants,
for example, 
\begin{equation}
  \begin{array}{l}
   a_0 = Vol(M, g)\\
   a_2 = \frac{1}{6 }\int_M dV  \,R   \\
   a_4 = \frac{1}{360}\int_M dV \,(-12\Delta R+5R^2 -2 R^{ij}R_{ij} +2 R^{ijkl}R_{ijkl})   \\
  \end{array}
\end{equation}

One defines the spectral zeta function of the Laplace--Beltrami operator $\Delta$ as usual,
\begin{equation} \label{zetafun}
   \zeta_{\Delta}(s) := \sum_{k = 1}^\infty  \lambda_k^{-s} = Tr(\Delta^{-s})
\end{equation}
which is well-defined for $Re(s) > \frac{n}{2}$.
The zeta function is related to the heat kernel by the Mellin transform,
\begin{equation}
   \zeta_{\Delta}(s) = \frac{1}{\Gamma(s)} \int_0^\infty t^{s-1} K(t, \Delta) dt = \frac{1}{\Gamma(s)} \int_0^\infty t^{s-1} Tr(e^{-t\Delta}) dt 
\end{equation}
where $\Gamma(s)$ is the gamma function.

From the basic formula for a positive eigenvalue $\lambda > 0$ of $\Delta$, 
\begin{equation}
   \ln \lambda = - \int_0^\infty \frac{dt}{t} e^{-t\lambda}
\end{equation}
the  logarithmic determinant (or log-determinant) of $\Delta$ can be expressed as
\begin{equation}
   \ln \det (\Delta) = Tr \ln (\Delta) =  - \int_0^\infty \frac{dt}{t} Tr(e^{-t\Delta} )
\end{equation}
The one-loop effective action is defined by 
\begin{equation}
   W := \frac{1}{2}  \ln \det (\Delta) = -  \frac{1}{2}  \int_0^\infty \frac{dt}{t} Tr(e^{-t\Delta} )
\end{equation}

The above effective action $W$ also appears in the zeta function regularization scheme, more details can be found in \cite{V03}, if the regularized effective action is defined as
\begin{equation}
   W_s := -\frac{1}{2} \tilde{\mu}^{2s} \Gamma(s)  \zeta_{\Delta}(s) = -\frac{1}{2} \tilde{\mu}^{2s} \int_0^\infty t^{s-1} Tr(e^{-t\Delta}) dt 
\end{equation}
where $\tilde{\mu}$ is a constant parameter,
then $W$ is the limit $W = \lim_{s \rightarrow 0} W_s$, and the regularization is removed in the limit  $s \rightarrow 0$. 
In this context,  $ \zeta_{\Delta}(s)$ is also called the regularized zeta function. 
Since the gamma function $\Gamma(s)$ has a simple pole at $s= 0$, the regularized effective action $W_s$ also has a pole at $s= 0$.
By renormalization, the pole at $s= 0$ can be removed and  the renormalized effective action is obtained as
\begin{equation} \label{RenEffAct}
   W_{ren}(\mu) = -\frac{1}{2}\zeta_\Delta'(0) -\frac{1}{2}\ln (\mu^2) \zeta_\Delta(0)
\end{equation}
where the rescaled parameter $\mu$ is related to $\tilde{\mu}$ by $\mu^2 = e^{-\gamma_E} \tilde{\mu}^2$ ($\gamma_E$ is the Euler constant).

Without the regularization and renormalization, the effective action $W$ is expressed in terms of the  zeta function as
\begin{equation}
  W =    - \frac{1}{2} \frac{d}{ds} \zeta_\Delta(s)|_{s= 0} = -  \frac{1}{2} \zeta_\Delta'(0)
\end{equation}
from the observation (differentiating the series form of $\zeta_\Delta(s)$  \eqref{zetafun} term by term)
\begin{equation}
   \frac{d}{ds} \zeta_\Delta(s) =  \sum_{k = 1}^\infty  \frac{- \ln \lambda_k}{\lambda_k^{s}} 
\end{equation}

For a general Laplace-type operator $D = - \Delta + V$, where $V$ is a matrix-valued potential, 
and an auxiliary smooth matrix-valued function $f$, the trace of the heat kernel is well defined,
\begin{equation}
    Tr(f e^{-tD}) = \int_M dV\, K(t; x, x; D) f(x)  = K(t, f, D)
\end{equation}
There is an asymptotic expansion when $t$ tends to $0$,
\begin{equation}
   Tr(f e^{-tD}) \sim (4\pi t)^{-\frac{n}{2}}[a_0(f, D) + a_2(f, D)t + a_4(f, D)t^2 + \cdots ]  \quad  t \rightarrow 0^+
\end{equation}
For example, the first three nonzero terms are given by 
\begin{equation}
  \begin{array}{l}
   a_0(f, D) = \int_M dV ~Tr\{f\} \\
   a_2(f, D) = \frac{1}{6 }\int_M dV  \,Tr\{f(6V + R)\}   \\
   a_4(f, D) =  \frac{1}{360}\int_M dV \, Tr\{ f(60\Delta V + 60RV + 180V^2  \\
  \qquad \qquad \qquad + 12\Delta R+ 5R^2 -2 R^{ij}R_{ij} +2 R^{ijkl}R_{ijkl} ) \}  
  \end{array}
\end{equation}

Similarly, the spectral zeta function can be generalized as
\begin{equation}
   \zeta(s, f, D)= Tr(fD^{-s})
\end{equation}
and by the Mellin transform,
\begin{equation}
   \zeta(s, f, D) = \frac{1}{\Gamma(s)} \int_0^\infty dt \, t^{s-1} Tr(f e^{-tD})
\end{equation}
The heat kernel coefficients can be expressed as the residues at the poles,
\begin{equation}
   a_k(f,D) = Res_{s = (n-k)/2} (\Gamma(s) \zeta(s, f, D))
\end{equation}
in particular, 
\begin{equation}
   a_n(f, D) = \zeta(0, f, D)
\end{equation}

If $N$ is a smooth compact Riemannian manifold with smooth boundary $\partial N$, then one has to define boundary conditions $\mathcal{B}$, for example, Dirichlet or Neumann boundary conditions.
There also exists an asymptotic expansion,
\begin{equation}
    Tr(f e^{-tD}) \sim \sum_{k \geq 0}  t^{\frac{(k-n)}{2}}a_k(f, D, \mathcal{B})    \quad  t \rightarrow 0^+
\end{equation}
where $a_k(f, D, \mathcal{B})$ are local invariants depending on the boundary data  $\mathcal{B}$, and the odd terms $a_{2i+1}(f, D, \mathcal{B})$ are nonvanishing.

\subsection{Laplace operator}

The Polyakov action \eqref{SigmaMod} is also written as 
\begin{equation}
     S( g; \phi)  =  \int_\Sigma dv \,tr_{\gamma} (\phi^*g) 
\end{equation}
where  the string tension $T$ has been absorbed into the action and the Lagrangian is the trace of the pullback or induced metric,  
\begin{equation}
   \phi^*g_{\mu \nu}(x) = g_{ij}(\phi(x))  \partial_\mu \phi^i(x) \partial_\nu \phi^j(x) 
\end{equation}

In particular, if the background metric $g$ is  flat, i.e., $g_{ij} = \delta_{ij}$, then one obtains the simplified action, 
\begin{equation}
     S( \delta; \phi)  =   \int_\Sigma d\sigma \,\sqrt{|\gamma|} \gamma^{\mu\nu}  \partial_\mu \phi^i (x) \partial_\nu \phi_i(x) 
\end{equation}
If one further applies integration by parts, then the action becomes
\begin{equation}
     S( \delta; \phi)  =  - \int_\Sigma dv \,  \phi^i (x)  \Delta_\gamma \phi_i(x) 
\end{equation}
where $\Delta_\gamma$ is the Laplace--Beltrami operator with respect to $\gamma_{\mu\nu}$,
\begin{equation}
  \Delta_\gamma u = \frac{1}{\sqrt{|\gamma|}} \partial_\mu ( \sqrt{|\gamma|} \gamma^{\mu\nu}    \partial_\nu u )
\end{equation}
In practice, the  metric $g$ is sometimes decomposed as $g_{ij} = \delta_{ij} + h_{ij}$; in this case, the Polyakov action is written as
\begin{equation}
   S(g; \phi) = S(\delta; \phi) + S(h; \phi)
\end{equation}
where $S(h; \phi)$ is viewed as a fluctuation.

\begin{lemma}
   The Polyakov action \eqref{SigmaMod} can be expressed as 
   \begin{equation}\label{Pact}
            S( g; \phi)  =     -  \int_\Sigma dv \,  g_{ij}\phi^i \partial^\mu  \partial_\mu \phi^j + \frac{1}{2} \int_\Sigma dv \,  ( \partial^\mu  \partial_\mu g_{ij} ) \phi^i \phi^j
   \end{equation}
\end{lemma}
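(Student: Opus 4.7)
The plan is to start from the pullback form of the Polyakov action, in which the string tension has been absorbed, namely
\begin{equation*}
  S(g;\phi) = \int_\Sigma dv \, \gamma^{\mu\nu} g_{ij}(\phi(x)) \partial_\mu \phi^i \partial_\nu \phi^j,
\end{equation*}
and reach the stated expression by integrating by parts twice on the closed surface $(\Sigma,\gamma)$. Since $\Sigma$ has no boundary, all total-divergence terms vanish, and we are free to move derivatives around at will; this is the only analytic input needed.

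First, I would move one derivative off $\partial_\mu \phi^i$. Writing $\gamma^{\mu\nu}\partial_\nu \phi^j = \partial^\mu \phi^j$ and applying the divergence theorem gives
\begin{equation*}
 S(g;\phi) = -\int_\Sigma dv \, \phi^i \, \partial_\mu \bigl(g_{ij}(\phi)\,\partial^\mu \phi^j\bigr)
           = -\int_\Sigma dv \, g_{ij}\phi^i \partial^\mu\partial_\mu \phi^j \; - \; \int_\Sigma dv \, \phi^i (\partial_\mu g_{ij})(\partial^\mu \phi^j).
\end{equation*}
The first piece is already the first term in \eqref{Pact}. Note that $\partial_\mu g_{ij}(\phi(x)) = (\partial g_{ij}/\partial \phi^k)\partial_\mu \phi^k$ is well-defined as a function on $\Sigma$, so the cross term makes sense.

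Second, I would handle the cross term by symmetrizing in $i \leftrightarrow j$. Because $g_{ij}$ is symmetric, relabeling summation indices gives
\begin{equation*}
  2\int_\Sigma dv \, \phi^i (\partial_\mu g_{ij})(\partial^\mu \phi^j)
  = \int_\Sigma dv \, (\partial_\mu g_{ij}) \bigl[\phi^i \partial^\mu \phi^j + \phi^j \partial^\mu \phi^i\bigr]
  = \int_\Sigma dv \, (\partial_\mu g_{ij}) \, \partial^\mu(\phi^i \phi^j).
\end{equation*}
A second integration by parts then converts the right-hand side into $-\int dv \, (\partial^\mu\partial_\mu g_{ij})\phi^i\phi^j$, so that the cross term equals $-\tfrac{1}{2}\int dv \, (\partial^\mu\partial_\mu g_{ij})\phi^i\phi^j$. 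Substituting back produces exactly \eqref{Pact}.

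There is no real obstacle here; the only point that warrants care is the symmetrization step, which is what allows the inhomogeneous term $\partial_\mu g_{ij}$ to be repackaged as a derivative of the symmetric product $\phi^i\phi^j$ and hence moved onto $g_{ij}$ by a second integration by parts. The notation $\partial^\mu\partial_\mu$ is to be read as the Laplace--Beltrami operator $\Delta_\gamma$ defined earlier, acting on the scalar quantities $\phi^j$ and $g_{ij}(\phi(\cdot))$; with that convention understood, no further curvature corrections enter.
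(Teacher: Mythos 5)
Your proof is correct and takes essentially the same route as the paper: repeated integration by parts on the closed surface $(\Sigma,\gamma)$ combined with the symmetry of $g_{ij}$. The only difference is bookkeeping --- you symmetrize the cross term into $\tfrac{1}{2}(\partial_\mu g_{ij})\,\partial^\mu(\phi^i\phi^j)$ and integrate by parts once more, whereas the paper keeps integrating by parts until $S(g;\phi)$ reappears on the right-hand side and then solves the resulting identity for $S$; these are the same computation organized differently, and your explicit remark that $\partial^\mu\partial_\mu$ must be read as $\Delta_\gamma$ for the integration by parts against $dv=\sqrt{|\gamma|}\,d\sigma$ to close up is a point the paper leaves implicit.
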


\begin{proof}
The main tool here is  integration by parts. 
\begin{equation}
   \begin{array}{ll}
     S( g; \phi) & = \int_\Sigma dv \,   g_{ij}(\phi) \partial^\mu\phi^i  \partial_\mu \phi^j  \\
                 & = -  \int_\Sigma dv \,  \phi^i \, \partial^\mu [g_{ij}(\phi)   \partial_\mu \phi^j] \\
                 & = -  \int_\Sigma dv \, \phi^i g_{ij}(\phi)  \partial^\mu  \partial_\mu \phi^j -  \int_\Sigma dv \, \phi^i \partial^\mu g_{ij}(\phi)     \partial_\mu \phi^j   \\
                 
   \end{array}
\end{equation}
where the modified Laplace operator will be denoted by $\Delta_g  = g_{ij}(\phi)  \partial^\mu \partial_\mu$. 
Let us continue to work on the second integral,
\begin{equation}
   \begin{array}{ll}
     - \int_\Sigma dv \, \phi^i  \partial^\mu g_{ij}(\phi)  \partial_\mu \phi^j 
     & =   \int_\Sigma dv \,  \phi^j  \partial_\mu [  \phi^i \partial^\mu g_{ij}(\phi)]   \\
     & =    \int_\Sigma dv \,  \phi^j  \partial_\mu   \phi^i \partial^\mu g_{ij}(\phi)  + \int_\Sigma dv \,  \phi^j    \phi^i  \partial_\mu \partial^\mu g_{ij}(\phi)   \\
       \end{array}
\end{equation} 
The above first integral is 
\begin{equation}
   \begin{array}{ll}
      \int_\Sigma dv \,  \phi^j  \partial_\mu   \phi^i \partial^\mu g_{ij}(\phi)
     & =  - \int_\Sigma dv \, g_{ij}(\phi) \partial^\mu [\phi^j  \partial_\mu   \phi^i]      \\
     & =   -\int_\Sigma dv \, g_{ij}(\phi)\partial_\mu   \phi^i  \partial^\mu \phi^j     -\int_\Sigma dv \, g_{ij}(\phi) \phi^j \partial^\mu  \partial_\mu \phi^i      
               
   \end{array}
\end{equation} 

Putting it together, we obtain
\begin{equation}
  \begin{array}{ll}
   S( g; \phi)  = &  - 2 \int_\Sigma dv \,  \phi^i \Delta_g \phi^j -  S( g; \phi) 
           + \int_\Sigma dv \,   \phi^i  \phi^j \partial_\mu \partial^\mu g_{ij}(\phi)

  \end{array}
\end{equation}
Or equivalently, 
\begin{equation}
   \begin{array}{ll}
      S( g; \phi) & =     -  \int_\Sigma dv \,  \phi^i \Delta_g \phi^j + \frac{1}{2} \int_\Sigma dv \,   \phi^i  \Delta g_{ij}(\phi)  \phi^j
   \end{array}
\end{equation}
where $\Delta = \gamma^{\mu\nu}\partial_\nu \partial_\mu = \partial^\mu \partial_\mu  $. If the background metric $g$ is flat, 
i.e., $g_{ij} = \delta_{ij}$, then the second term is zero and the first term recovers the Laplace--Beltrami operator.

\end{proof}

If the real scalar field $\phi: \Sigma \rightarrow M$ has the local coordinates $\phi(x) = (\phi^1(x), \cdots, \phi^n(x) )$ for $x \in \Sigma$,
then we can denote the above integrals by  inner products, 
\begin{equation}
   \langle \phi , \Delta \phi \rangle_g := \int_\Sigma dv \,  g_{ij}\phi^i \partial^\mu  \partial_\mu \phi^j 
\end{equation}
\begin{equation}
   \langle \phi , \phi \rangle_{\Delta g} := \int_\Sigma dv \,  ( \partial^\mu  \partial_\mu g_{ij} ) \phi^i \phi^j
\end{equation}
Similarly, the Polyakov action can be expressed as 
\begin{equation}
  S(g; \phi) =   \int_\Sigma dv \,g_{ij}  \partial^\mu \phi^i \partial_\mu \phi^j =   \langle d\phi , d \phi \rangle_g 
\end{equation}

\begin{lemma}
   The relation \eqref{Pact} in the above lemma is equivalent to
   \begin{equation}
      \langle d\phi , d \phi \rangle_g  = - \langle \phi , \Delta \phi \rangle_g + \frac{1}{2} \langle \phi , \phi \rangle_{\Delta g}
   \end{equation}
or under the standard inner product  $ \langle \phi ,  \phi \rangle := \sum_{i,j=1}^n  \int_\Sigma dv \,    \phi^i \phi^j$,  
    \begin{equation}
      \langle d\phi ,g d \phi \rangle  = - \langle \phi , g\Delta \phi \rangle + \frac{1}{2} \langle \phi , {\Delta g}\phi \rangle
   \end{equation}
\end{lemma}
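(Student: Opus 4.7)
The plan is to observe that this lemma is essentially a reformulation of the previous one in compact inner-product notation, so the proof reduces to matching definitions term by term. First I would unpack the left-hand side: by the defining formula $S(g;\phi) = \int_\Sigma dv\, g_{ij}\partial^\mu\phi^i \partial_\mu \phi^j$ for the Polyakov action and the notation $\langle d\phi, d\phi\rangle_g = \int_\Sigma dv\, g_{ij}\partial^\mu \phi^i \partial_\mu \phi^j$ given in the preceding paragraph, these two expressions are literally the same object. So the left-hand side of the claimed identity is nothing but $S(g;\phi)$.

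Next I would invoke \eqref{Pact} from the preceding lemma to expand $S(g;\phi)$ as
\begin{equation*}
   S(g;\phi) = -\int_\Sigma dv\, g_{ij}\phi^i \partial^\mu \partial_\mu \phi^j + \frac{1}{2}\int_\Sigma dv\, (\partial^\mu \partial_\mu g_{ij})\phi^i \phi^j.
\end{equation*}
The two integrals on the right are, by the definitions $\langle \phi,\Delta\phi\rangle_g := \int_\Sigma dv\, g_{ij}\phi^i \partial^\mu \partial_\mu \phi^j$ and $\langle \phi,\phi\rangle_{\Delta g} := \int_\Sigma dv\, (\partial^\mu\partial_\mu g_{ij})\phi^i\phi^j$, equal to $-\langle\phi,\Delta\phi\rangle_g$ (after including the minus sign) and $\tfrac{1}{2}\langle\phi,\phi\rangle_{\Delta g}$ respectively. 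Reassembling gives exactly the first displayed identity.

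For the second displayed form, I would simply re-interpret each weighted inner product as a standard inner product on $\mathbb{R}^n$-valued scalar fields by pulling the metric (or its Laplacian) out as an operator acting on the second slot: $\langle d\phi, d\phi\rangle_g = \langle d\phi, g\, d\phi\rangle$, $\langle \phi,\Delta\phi\rangle_g = \langle \phi, g\Delta\phi\rangle$, and $\langle \phi,\phi\rangle_{\Delta g} = \langle \phi,(\Delta g)\phi\rangle$. This is purely a rewriting, using the component definition $\langle \phi,\phi\rangle = \sum_{i,j}\int_\Sigma dv\, \phi^i\phi^j$ stated in the lemma.

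The only real obstacle is bookkeeping: checking that $\Delta$ in the phrase $\Delta g_{ij}$ denotes the same scalar Laplacian $\partial^\mu\partial_\mu$ that appears in \eqref{Pact} (it does, as noted right after the proof of the previous lemma), and that the sign and the factor $\tfrac{1}{2}$ line up. No genuine analytic content beyond the previous integration-by-parts argument is required; this lemma is a packaging statement that sets up the inner-product formalism used in the sequel.
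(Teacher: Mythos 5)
Your proposal is correct and matches the paper's treatment: the paper offers no separate proof for this lemma precisely because it is a definitional repackaging of \eqref{Pact} into the weighted inner-product notation introduced just before it, which is exactly the term-by-term matching you carry out. Nothing further is needed.
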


Therefore, the Laplace operator in the Polyakov action is given by 
\begin{equation} \label{LapOp}
   L(\phi, g, \gamma ) = -g\Delta + \frac{1}{2}{\Delta g} = - g_{ij}(\phi)\gamma^{\mu\nu}\partial_\mu\partial_\nu + \frac{1}{2} \gamma^{\mu\nu}\partial_\mu\partial_\nu  g_{ij}(\phi)
\end{equation}
If we view $g_{ij}$ as an auxiliary function, then in the standard form
\begin{equation}
    L(\phi, g, \gamma )  = g(-\Delta + V) = gD(\phi, g, \gamma)
\end{equation}
where the potential $V = \frac{1}{2}g^{-1}\Delta g$ and $D = -\Delta +V$ is a Laplace-type operator. 
For example, if $g_{ij}(\phi) = e^{\phi(x)}g^0_{ij} $ 
for some constant metric $g^0$ in the background, then 
\begin{equation}
 D = -\Delta + \frac{1}{2}(\partial^\mu\phi \partial_\mu\phi + \Delta \phi) 
\end{equation}

\subsection{Ricci flow}

Now we apply the heat kernel method to the operator $D$ and the auxiliary function $g$,  we have the partition function depending on $g$,
\begin{equation}
 Z(t, g) = Tr(ge^{-tD}) 
\end{equation}
and the one-loop effective action of the Polyakov action,
\begin{equation}
   W(g) = \frac{1}{2} \ln \det  (gD) = - \frac{1}{2}\int_0^\infty \frac{dt}{t} Tr(ge^{-tD})
\end{equation}

Suppose  the coupling parameter $g$ is a function of the energy scale $\Lambda$, then the Callan--Symanzik equation reads
\begin{equation}
  \frac{\partial W(g(\Lambda)) }{\partial \ln \Lambda} + \beta(g) \frac{\partial W(g)}{\partial g} = 0
\end{equation}
So the beta function can be computed by
\begin{equation}
 \beta(g) = - \frac{\partial W(g(\Lambda)) }{\partial \ln \Lambda} / \frac{\partial W(g)}{\partial g}
\end{equation}
If we differentiate the effective action $W(g)$, then  the denominator is 
\begin{equation}
   \frac{\partial W(g)}{\partial g}  = - \frac{1}{2}\int_0^\infty \frac{dt}{t} Tr(e^{-tD})
\end{equation} 
which can be viewed as a normalization constant $W = \frac{1}{2} \ln \det  (D)$ independent of $g$.

From \cite{H77} (or Eq. \eqref{RenEffAct}), the renormalized effective action depending on the renormalization parameter $\Lambda$ (here we use energy scale $\Lambda$ instead of $\mu$) is given by
\begin{equation}
  W(g (\Lambda)) = - \frac{1}{2} [\zeta'(0, gD) + \ln \Lambda^2 \, a_n(g, D)] 
\end{equation}
In order to approximate the first-order term in the beta function, it is enough to take $a_2(g, D)$,
so we have
\begin{equation}
   -\beta(g) =  \frac{1}{W} \frac{\partial W(g(\Lambda)) }{\partial \ln \Lambda} \sim - a_2(g, D) = -\frac{1}{6 }\int_M dV  \,Tr(g R) -\frac{1}{2} \int_M dV  \,Tr ( \Delta g) 
\end{equation}
If we view the first integral in $a_2(g, D)$ as a finite number, and the local form of the second integral recovers the Ricci flow,
\begin{equation}
  \frac{\partial g_{ij}}{\partial t} = -R_{ij} 
\end{equation}
since one has the approximation $\Delta g_{ij} \sim -2R_{ij}$ in  harmonic  coordinates.

\section{Variation of eigenvalues in even dimensions}\label{Varen}

In this section, we will continue to consider the Polyakov action \eqref{SigmaMod}, and derive a formula for
the variation of eigenvalues of the Laplace operator \eqref{LapOp} under the renormalization group (RG) flow.

Let us first briefly review the scale invariance, which is the key in understanding  the RG flow, see \cite{C10} for more details. 
The RG flow  can be represented by the RG transformation, 
\begin{equation}
   \mathcal{RG}_t: C^\infty(\Sigma, M) \times \mathcal{M}  \rightarrow C^\infty(\Sigma, M) \times \mathcal{M} ; \quad (\phi, g) \mapsto (\phi_t, g_t)
\end{equation}
where $t$ is a length scale and $\mathcal{M}$ is the coupling space of metrics. Moreover, the RG transformation  on the space of effective actions is
\begin{equation}
   \begin{array}{rll}
         \widetilde{\mathcal{RG}}_{\Lambda \Lambda'}: \mathcal{ACT}(\Sigma, M; \mathcal{M}) & \rightarrow  & \mathcal{ACT}(\Sigma, M; \mathcal{M}) \\
                                              S(\phi_\Lambda; g(\Lambda)) & \mapsto &  S(\phi_{\Lambda'}; g(\Lambda'))
   \end{array}
\end{equation}
where $\Lambda > \Lambda'$ are energy scales  and the length scale $t = \ln(\Lambda/\Lambda')$ (or $\Lambda' = e^{-t}\Lambda$), note that $\mathcal{RG}_{\Lambda \Lambda'} =  \mathcal{RG}_{-t}$.
The following commutation relation is a natural requirement on the RG transformation,
\begin{equation}
    \widetilde{\mathcal{RG}}_{\Lambda \Lambda'}   S(\phi_\Lambda; g(\Lambda)) =   S(  \mathcal{RG}_{\Lambda \Lambda'}(\phi_\Lambda; g(\Lambda)))
\end{equation}
The measure in the path integral is defined by
\begin{equation}
   d\rho (\phi; g) := D\phi \, e^{-S(\phi; g)}
 \end{equation} 
and we have the measure space $\{C^\infty(\Sigma, M) \times \mathcal{M},  d\rho (\phi; g) \}$. In addition, the RG transformation acts on the measure by pullback
\begin{equation}
    \mathcal{RG}^*_t ( d\rho (\phi; g) )   =  \mathcal{RG}_{-t} (D\phi) e^{- \widetilde{ \mathcal{RG}_{-t}} S(\phi; g) } 
\end{equation}
The scale invariance requires that the QFTs are invariant under the change of scale by $t$ if $t$ is infinitesimal ($0 < t < \varepsilon$), that is, 
\begin{equation}
   \int_{C^\infty(\Sigma, M) \times \mathcal{M} }   d\rho (\phi; g) = \int_{  \mathcal{RG}_t  (C^\infty(\Sigma, M) \times \mathcal{M}) }  \mathcal{RG}^*_t ( d\rho (\phi; g) ) 
\end{equation}
In other words, along the RG flow, the distance scale becomes larger ($\Lambda^{-1} \rightarrow \Lambda'^{-1} $) and the measure becomes smaller, so the effective QFTs look the same.

\begin{thm} By the scale invariance of the RG flow, the nonperturbative RG flow equation of the eigenvalues of the Laplace operator in the Polyakov action  is  
   \begin{equation}
      \frac{\partial \lambda  }{\partial t}  =  a_n({g}, D) = \zeta(0, g, D)
   \end{equation}
   Here the minimal subtraction renormalization scheme is used; of course, the result as  a one-loop approximation is independent of the choice of renormalization schemes. 
\end{thm}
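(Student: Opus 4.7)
The plan is to derive the eigenvalue evolution by applying scale invariance of the RG flow to the one-loop effective action in the zeta function regularization developed in Sect.~\ref{Polact}. The starting point is the quadratic form of the Polyakov action $S(g;\phi)=-\langle\phi,L\phi\rangle+\tfrac{1}{2}\langle\phi,\phi\rangle_{\Delta g}$ from the preceding lemmas: Gaussian integration in the path integral produces the one-loop effective action $W=\tfrac{1}{2}\ln\det L=\tfrac{1}{2}\ln\det(gD)$, which in the minimal subtraction scheme gives the renormalized form
\begin{equation}
W_{\mathrm{ren}}(\Lambda)=-\tfrac{1}{2}\zeta'(0,gD)-\tfrac{1}{2}\ln\Lambda^{2}\,a_{n}(g,D),
\end{equation}
with $a_{n}(g,D)=\zeta(0,g,D)$ as recorded in the heat-kernel/Mellin dictionary above.

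Next I would invoke scale invariance of the path integral measure $d\rho(\phi;g)$, recalled just above the theorem. After integrating $\phi$ out, this infinitesimal invariance forces a Callan--Symanzik-type identity on $W_{\mathrm{ren}}$, in exact parallel with the Ricci-flow derivation of the previous subsection. The explicit $\ln\Lambda^{2}$ piece is the only source of explicit $\Lambda$-dependence in $W_{\mathrm{ren}}$, with coefficient $a_{n}(g,D)$; matching this against the implicit $\Lambda$-dependence carried by the spectrum will produce the sought eigenvalue flow.

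To read off that equation I would use the formal spectral expansion $\zeta'(0,gD)=-\sum_{k}\ln\lambda_{k}$ to rewrite
\begin{equation}
W_{\mathrm{ren}}\;\simeq\;\tfrac{1}{2}\sum_{k}\ln\lambda_{k}-\tfrac{1}{2}\ln\Lambda^{2}\,a_{n}(g,D),
\end{equation}
differentiate in $t=\ln\Lambda^{-1}$, and balance the counter-term against the drift of the eigenvalues mode by mode. At one loop the only quantity available to cancel the $\ln\Lambda^{2}a_{n}$ piece is $\partial_{t}\lambda_{k}$ itself, yielding the claimed $\partial\lambda/\partial t=a_{n}(g,D)=\zeta(0,g,D)$. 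Scheme independence will then follow from the observation that any other regularization differs from minimal subtraction by a finite local counter-term again proportional to $a_{n}$, leaving the one-loop derivative $\partial\lambda/\partial t$ unchanged.

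The hard part is this last step: the Callan--Symanzik identity is intrinsically a statement about the trace $\zeta'(0)$, not about any individual $\lambda_{k}$, so one must justify that the non-mode-dependent $\ln\Lambda^{2}$ counter-term is the \emph{only} source of RG drift at one loop and therefore that every eigenvalue inherits $a_{n}(g,D)$ as its rate of flow. The justification will rest on two features of the setup: the heat-kernel coefficient $a_{n}$ is a global spectral invariant carrying no mode label, and the minimal subtraction prescription treats every eigenmode on an equal footing, so the counter-term must be allocated uniformly across the spectrum.
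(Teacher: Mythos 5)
Your proposal stalls exactly at the step you flag as ``the hard part,'' and the justification you offer for it does not go through. The renormalized effective action only knows about the \emph{trace} $\zeta'(0,gD)=-\sum_k\ln\lambda_k$, so the Callan--Symanzik balancing gives at best the sum rule $\sum_k\partial_t\ln\lambda_k=\pm 2\,a_n(g,D)$. Your ``uniform allocation across the spectrum'' heuristic, applied honestly, distributes the drift equally among the summands $\ln\lambda_k$, i.e.\ it makes $\partial_t\ln\lambda_k$ mode-independent, which is the multiplicative flow $\partial_t\lambda_k\propto\lambda_k$ --- not the additive, $k$-independent rate $\partial_t\lambda=a_n(g,D)$ claimed in the theorem. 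There is no way to disaggregate a single trace identity into a statement about each individual eigenvalue without an additional structural input, and the one you supply (that $a_n$ carries no mode label) does not determine \emph{how} the drift is shared out.

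The paper supplies that missing input by a Wilsonian shell argument rather than by differentiating $W_{\mathrm{ren}}$. It splits $\phi=\psi+\chi$ with $\psi$ supported in $[0,\Lambda']$ and $\chi$ in $(\Lambda',\Lambda]$, drops the cross term, and observes that integrating out $\chi$ contributes the \emph{additive constant} $\ln\det(gD)\big|_{\Lambda'}^{\Lambda}$ to the exponent governing $\psi$. Fixing a single normalized eigenfunction, $L(\Lambda)\psi=\lambda(\Lambda)\psi$ with $\langle\psi,\psi\rangle=1$, this constant is read off directly as a shift of that eigenvalue, $\lambda(\Lambda')=\lambda(\Lambda)+\ln\det(gD)\big|_{\Lambda'}^{\Lambda}$ --- the same shift for every mode, by construction, with no need to apportion a trace. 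The shell determinant is then evaluated through the heat kernel with proper-time limits $t\in[\Lambda^{-2},\Lambda'^{-2}]$; the asymptotic expansion splits into a power-divergent part ($k<n$, subtracted), a convergent part ($k>n$, dropped), and the logarithmic part $-2\ln(\Lambda/\Lambda')\,a_n(g,D)$, which after minimal subtraction yields $\partial\lambda/\partial t=a_n(g,D)=\zeta(0,g,D)$. If you want to salvage your route, you would need to replace the global $W_{\mathrm{ren}}$ by a mode-resolved object --- effectively reconstructing the shell integration --- since the full zeta-regularized determinant alone cannot see individual eigenvalues.
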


\begin{proof}
 
For a scalar field with upper  energy bound $\Lambda$, i.e., $\phi \in C^{\infty}(\Sigma, M) [0, \Lambda]$, 
the effective action $S_{eff}[\Lambda, g]$ is defined by
 \begin{equation}
   e^{- S_{eff}[\Lambda, g]  } = \int_{C^{\infty}(\Sigma, M) [0, \Lambda]} D\phi \,\,exp\left\{ - \langle {\phi}, L(\Lambda, \phi, g ) {\phi} \rangle \right\} 
\end{equation}
where $L(\Lambda, \phi, g) = L(\phi, g, \gamma)$ from \eqref{LapOp} is the  Laplace operator in the Polyakov action, 
and now it is assumed to be depending on $\Lambda$. 
Decompose the scalar field into $\phi = \psi + \chi$ such that  $\psi \in C^{\infty}(\Sigma, M) [0, \Lambda']$ 
and $\chi \in C^{\infty}(\Sigma, M) ( \Lambda', \Lambda]$, the path integral measure factorizes accordingly, i.e., $D\phi = D\psi\, D\chi$, so  
\begin{equation}
         e^{- S_{eff}[\Lambda, g]  } = \int_{[0, \Lambda']} D\psi \int_{ (\Lambda', \Lambda]} D\chi \,\, exp\left\{ - 
         \langle {\psi+ \chi}, L(\Lambda,  {\psi+ \chi}, g) ({\psi + \chi }) \rangle \right\}
\end{equation}
The action functional in the right hand side can be decomposed as
\begin{equation}
   \begin{array}{ll}
    &   \langle {\psi+ \chi}, L(\Lambda, \psi +\chi, g) ({\psi + \chi }) \rangle \\
    & =   \langle {\psi}, L(\Lambda, \psi, g) {\psi } \rangle +    \langle {\chi}, L(\Lambda, \chi, g) { \chi } \rangle + S^{int}_\Lambda[\psi, \chi]
   \end{array}   
\end{equation}
where $S^{int}_\Lambda[\psi, \chi]$ is the effective interaction term.  $S^{int}_\Lambda[\psi, \chi]$ can be ignored since $\psi$ and $\chi$ fall into distinct energy range, 
so we rewrite the above path integral as
\begin{equation}
         \int_{[0,{\Lambda'}]} D\psi exp\left\{ -  \langle {\psi}, L(\Lambda, \psi, g) {\psi } \rangle \right\} \int_{[{\Lambda'}, \Lambda]} D\chi  
            exp\left\{ - \langle {\chi}, L(\Lambda,\chi,g) { \chi } \rangle \right\}
\end{equation}
Denote the integrand using a  new effective action  $S(\psi, g)$,
\begin{equation}\label{RGlogic}
   e^{- S(\psi,g)} = exp\left\{ - \langle {\psi}, L(\Lambda,\psi, g) {\psi } \rangle \right\} \int_{[\Lambda', \Lambda]} D\chi \, 
              exp\left\{ - \langle {\chi}, L(\Lambda, \chi, g) { \chi } \rangle \right\}
\end{equation}
According to the scale invariance of the RG flow, $S(\psi, g)$ 
must take the form $   \mathcal{RG}_{\Lambda \Lambda'}S(\phi_{\Lambda}, g_{\Lambda}) =  S(\phi_{\Lambda'}, g_{\Lambda'})$, namely
\begin{equation} \label{RGAsp}
  S(\psi,g) =  \langle {\psi}, L(\Lambda',\psi, g) {\psi } \rangle 
\end{equation}
 with the energy scale $\Lambda'$ in the Laplacian $L$.
This assumption guarantees the  effective QFTs are invariant under the change of scale from $\Lambda$ to $\Lambda'$,
\begin{equation}
   \begin{array}{ll}
        \int_{[0, \Lambda]} D\phi \,\,exp\left\{ - \langle {\phi}, L(\Lambda, \phi, g) {\phi} \rangle \right\} 
                                     = \int_{[0, {\Lambda'}]} D\psi \,\,exp\left\{ - \langle {\psi}, L(\Lambda', \psi, g) {\psi} \rangle \right\} 
                                      
   \end{array}
\end{equation}

Now under the scale invariance \eqref{RGAsp},  we rewrite   \eqref{RGlogic}  as
\begin{equation}
    \langle {\psi}, L(\Lambda', \psi ,g) {\psi }\rangle =   \langle {\psi}, L(\Lambda, \psi, g) {\psi } \rangle - 
    \ln \int_{\Lambda'}^\Lambda D\chi \, 
              exp\left\{ - \langle {\chi}, L(\Lambda, \chi, g) { \chi } \rangle \right\}
\end{equation}
Suppose that $ L(\Lambda, \psi, g) \psi = \lambda(\Lambda) \psi$ at the energy scale $\Lambda$ for an eigenfunction $\psi$,
and $ L(\Lambda', \psi, g) \psi = \lambda(\Lambda') \psi$ is assumed to be the effective
eigenvalue equation at $\Lambda'$ for the same eigenfunction $\psi$.
Note that the eigenfunction $\psi$ can be chosen such that $ \langle {\psi}, {\psi }\rangle = 1$, and the definite integral  is  represented by
the truncated Laplacian determinant, so we obtain
\begin{equation}\label{Lapdet}
   \lambda(\Lambda') = \lambda(\Lambda) + \ln \det  (gD)|_{\Lambda'}^\Lambda
\end{equation}

Using the heat kernel, it can be expressed as
\begin{equation}\label{Heateq}
    \lambda(\Lambda') = \lambda(\Lambda) - \int_{\Lambda^{-2}}^{\Lambda'^{-2}} \frac{dt}{t} K(t,g, D)
\end{equation}
where the integral variable is identified to be $t \sim \Lambda^{-2}$. 
Suppose that the energy level $\Lambda$ is large and $\Lambda'$ is very close to $\Lambda$ 
so that $0 < \Lambda^{-2} < \Lambda'^{-2} < 1$; in this case, the heat kernel is asymptotically expanded as usual
\begin{equation}
   K(t, g, D) \sim \sum_{k \geq 0} t^{(k-n)/2} a_k(g, D) 
\end{equation}
where the multiple constant $(4\pi)^{-n/2}$ is absorbed in $a_k$. 

Plug the asymptotic expansion into  \eqref{Heateq}, we get
\begin{equation}
    \lambda(\Lambda') = \lambda(\Lambda) - \sum_{k \geq 0} a_k(g, D) \int_{\Lambda^{-2}}^{\Lambda'^{-2}}  t^{(k-n-2)/2}  {dt}
\end{equation}
There are three different parts in the above summation after we integrate it out. 
Here we introduce a new variable 
\begin{equation}
   \tau = \ln (\Lambda'^{-2}/\Lambda^{-2}), \quad i.e. \quad \Lambda'^{-2} = e^\tau \Lambda^{-2} 
\end{equation}

$\bullet$  $ 0 \leq k < n$, the first part is divergent if $\Lambda \rightarrow \infty $,
\begin{equation}
   \sum_{0 \leq k < n} \frac{2a_k}{(k-n)}t^{(k-n)/2}|_{\Lambda^{-2}}^{\Lambda'^{-2}} 
   = \sum_{0 \leq k < n} \frac{2a_k}{(k-n)}{\Lambda}^{n-k}( e^{-\tau(n-k)/2}- 1)  
\end{equation}

$\bullet$  $k> n$, this  is the convergent part in the limit $\Lambda \rightarrow \infty $,
\begin{equation}
   \sum_{k > n} \frac{2a_k}{(k-n)}t^{(k-n)/2}|_{\Lambda^{-2}}^{\Lambda'^{-2}}
   = \sum_{ k > n} \frac{2a_k}{(k-n)}\frac{e^{\tau(k-n)/2}- 1}{\Lambda^{k-n}}  
\end{equation}

$\bullet$  $k = n$, this part has the expected logarithmic growth,
\begin{equation}
  2\ln (\Lambda/\Lambda') \,a_n(g, D)
\end{equation}

We pick out this logarithmic growth term as the critical part; 
meanwhile, we have to subtract the divergent part and ignore the convergent part.
After such  regularization and renormalization,  we get the expression of the effective eigenvalue at the energy scale $\Lambda'$,
\begin{equation}
   \lambda(\Lambda') \sim \lambda(\Lambda) - 2\ln (\Lambda/\Lambda') a_n(g, D)
\end{equation}
and a similar treatment on the effective metric using  minimal subtraction can be found in \cite{C10}.
On the right hand side, in addition to a renormalized eigenvalue $\lambda(\Lambda/\Lambda')$, a counterterm must be added to cancel the one-loop singularity,   
\begin{equation}
   \lambda(\Lambda) = \lambda(\Lambda/\Lambda')+ 2\ln (\Lambda/\Lambda') a_n(g, D)
\end{equation}
Now the eigenvalue $\lambda$ on the left hand side is physical and does not depend on the energy scale and the choice of renormalization schemes, 
that is, its variation with respect to $\tau = 2\ln (\Lambda/\Lambda') $ vanishes.
\begin{equation}
   0 = \frac{\partial \lambda}{\partial \tau}  =  \frac{\partial }{\partial \tau}  \lambda(\tau)+ a_n(g, D)
\end{equation}
In other words, the variation of the eigenvalue with respect to $\tau$ is 
\begin{equation}
   \frac{\partial }{\partial \tau}  \lambda(\tau) = - a_n(g, D)
\end{equation}
Finally, if we introduce a new parameter $t = - \tau$, then  the variation of eigenvalues under the RG flow is
\begin{equation}
   \frac{\partial }{\partial t}  \lambda(t) =  a_n(g, D)
\end{equation}
\end{proof}

From \cite{H77} (or \eqref{RenEffAct}), the renormalized log-determinant can be defined as
\begin{equation}\label{logdet}
  \ln \det(D, \Lambda) = - \zeta'(0, D) - \ln (\Lambda^2) \zeta(0, D)
\end{equation}
for a positive elliptic second-order differential operator $D$. From the proof of the above theorem, 
we know that 
\begin{equation}
   \ln \det  (gD)|_{\Lambda'}^\Lambda \sim - 2\ln (\Lambda/\Lambda') a_n(g, D) = - 2\ln (\Lambda/\Lambda') \zeta(0,g, D)
\end{equation}
In other words, if we set $g = 1$, then the variation of the log-determinant is
\begin{equation}
   \ln \det  (D, \Lambda) - \ln \det(D, \Lambda') = - 2 \ln (\Lambda / \Lambda') \zeta(0, D)
\end{equation}
which agrees with that derived from \eqref{logdet}.

Therefore, under the RG flow, the log-determinant of the Polyakov action varies according to
\begin{equation}
      \frac{\partial }{\partial t}  \ln \det(g, D, t) = a_n(g, D) = \zeta(0, g, D) 
\end{equation}
where $t = - \ln (\Lambda^2)$. 
In Sect. \ref{Conjodd}, we know that the thermal entropy is proportional to the logarithm of the partition function,
\begin{equation}
   S = \ln Z - \beta \frac{\partial }{\partial \beta} \ln Z \, \propto \, \ln Z
\end{equation}
If we  view the nonlinear sigma model \eqref{SigmaMod} as a quantum statistical mechanical system, 
then the  partition function is  the trace of the heat kernel \eqref{Trheat}.
\begin{cor} \label{VarEntrp}
   The variation of the thermal entropy of the Polyakov action is proportional to the top heat kernel coefficient along the RG flow, 
   \begin{equation}
      \frac{\partial S}{\partial t} \propto a_n(g, D) = \zeta(0, g, D)
   \end{equation}

\end{cor}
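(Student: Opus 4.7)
The plan is to chain together three ingredients that are already in place in the excerpt, so that the corollary becomes a bookkeeping consequence of the theorem rather than an independent calculation. First, I would invoke the statistical mechanical identity $S = \ln Z - \beta\,\partial_\beta \ln Z$ displayed just before the corollary. Since the RG parameter $t = -\ln(\Lambda^2)$ is independent of the inverse temperature $\beta$, differentiating along the flow commutes with the Legendre term and gives $\partial_t S \propto \partial_t \ln Z$, which reduces the corollary to a statement about the $t$-derivative of the partition function.

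Second, I would identify $Z$ with the Polyakov partition function under the quantum statistical mechanical interpretation indicated in the paragraph preceding the corollary. In the one-loop (Gaussian) regime used throughout Section 2, the scalar field path integral evaluates to $Z = \det^{-1/2}(gD)$, so that $\ln Z = -\tfrac{1}{2}\ln\det(gD)$. Equivalently, $-\ln Z$ is the one-loop effective action $W=\tfrac{1}{2}\ln\det(gD)$ already introduced in the heat kernel subsection, so this step does not introduce any new structure.

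Third, I would apply the renormalized log-determinant variation formula derived immediately before the corollary, namely
\begin{equation}
   \frac{\partial}{\partial t}\ln\det(g, D, t) \;=\; a_n(g,D) \;=\; \zeta(0, g, D),
\end{equation}
which is itself an immediate consequence of the main theorem combined with Hawking's zeta-regularized definition of the log-determinant in \eqref{logdet}. Chaining the three steps gives
\begin{equation}
   \frac{\partial S}{\partial t} \;\propto\; \frac{\partial \ln Z}{\partial t} \;=\; -\tfrac{1}{2}\,\frac{\partial}{\partial t}\ln\det(gD) \;=\; -\tfrac{1}{2}\,a_n(g,D),
\end{equation}
which is precisely the claimed proportionality.

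There is no genuine mathematical obstacle; the only conceptual delicacy is that the identification $\ln Z = -\tfrac{1}{2}\ln\det(gD)$ holds only at one loop, and that the proportionality constant absorbs both this one-half and the coefficient left implicit in $S \propto \ln Z$. Since the entire heat kernel derivation earlier in the paper is itself a one-loop approximation, this is consistent with the level at which the corollary is stated, and the argument as laid out requires no further input.
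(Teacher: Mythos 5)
Your proposal is correct and follows essentially the same route as the paper: the paper likewise asserts $S = \ln Z - \beta\,\partial_\beta \ln Z \propto \ln Z$, views the sigma model as a quantum statistical system whose partition function is governed by the one-loop determinant, and then invokes the just-derived relation $\frac{\partial}{\partial t}\ln\det(g,D,t) = a_n(g,D) = \zeta(0,g,D)$ to conclude. The only caveat worth noting is that your justification for discarding the $\beta\,\partial_\beta\ln Z$ term (independence of $t$ and $\beta$) sits slightly awkwardly with the paper's later identification $\beta \sim \Lambda^{-2}$, but the paper itself passes over this with the same bare ``$\propto$,'' so your argument is at the same level of rigor as the original.
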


Hence the variation of eigenvalues really reflects the variation of entropy under the RG flow.  By the feature of heat kernel coefficients,
we have to find a different formula for the variation of eigenvalues in odd dimensions.

\section{A conjecture in odd dimensions} \label{Conjodd}
In this section, we will first go over the thermal entropy and its relation to the partition function and the free energy in statistical mechanics.
After reviewing the irreversibility of the RG flow, i.e., the $c$-, $a$-, and $F$-theorems in different dimensions, we will conjecture a formula
for the variation of eigenvalues of the bulk Laplacian in odd dimensions based on  the geometric  AdS/CFT correspondence \cite{A05, PK05}.

\subsection{Thermal entropy}

Let us first recall some facts in classical and quantum statistical mechanics in the canonical ensemble, where the absolute
 temperature $T$ is  the principal thermodynamic variable and the total energy $E$ may differ between microstates.   
 As usual, the inverse temperature is defined as $\beta = 1/kT$, where $k$ is the Boltzmann constant, and as a convention one always sets $k = 1$.

 In classical statistical mechanics, an ensemble is represented by a probability function $P$ defined over the  phase space.
 According to the  Maxwell--Boltzmann distribution, the probability to find the system at the total energy  $E$ is given by
 \begin{equation}
    P = e^{\beta(F -E)}
 \end{equation}
 where $F$ is the Helmholtz free energy and is a constant for the ensemble.
 Or equivalently, the probability is expressed as
\begin{equation}
      P= \frac{1}{Z} e^{-\beta E} \, \quad \text{with} \quad Z =  e^{-\beta F}
\end{equation}
where $Z$ is called the canonical partition function. 

For example, in a discrete classical canonical ensemble, the canonical partition function is defined as
\begin{equation}
 Z = \sum_{i = 1}^n e^{-\beta E_i}
\end{equation}
where $E_i$ is  the total energy of the system in the respective microstate. 
In this case, the probability $P_i$ of the system occupying the $i$-th microstate  is
\begin{equation}
    P_i= \frac{1}{Z} e^{-\beta E_i}
\end{equation}
and the sum  over the complete set of microstates is $1$, i.e.,
\begin{equation}
   \sum_{i=1}^n P_i = \frac{1}{Z}  \sum_{i=1}^n  e^{-\beta E_i} = 1
\end{equation}
This explains why $P$ is called the probability function and its connection to probability theory, for example see \cite{L12}.

 The free energy $F$ is defined based on the partition function $Z$ by
\begin{equation}
  F= -\frac{1}{\beta}\ln Z
\end{equation} 
In other words,   $F$ can be used interchangeably with $Z$.
The Gibbs entropy is defined by the average of the logarithmic probability,  
\begin{equation}
   S = - \langle \ln P \rangle = \beta \langle E \rangle  + \ln Z
\end{equation}
where $\langle E \rangle $ denotes the average energy.

For convenience, one always expresses the thermodynamic variables in the partition function $Z$ and its derivatives,
\begin{equation}
 F = - \frac{1}{\beta} \ln Z, \quad \langle E \rangle = - \frac{\partial}{\partial \beta}  \ln Z, \quad S = \ln Z - \beta \frac{\partial}{\partial \beta}  \ln Z
\end{equation}
 
The energy fluctuation is the variance of the energy
\begin{equation}
   \sigma = \langle ( E - \langle E \rangle)^2 \rangle  =  \frac{\partial^2}{\partial \beta^2}  \ln Z \geq 0
\end{equation}
and the derivative of the entropy  is
\begin{equation}
    \frac{\partial S }{\partial \beta}=  - \beta \frac{\partial^2}{\partial \beta^2}  \ln Z = -\beta \sigma \leq 0
\end{equation}

To be complete, we also collect the thermodynamic variables expressed in the absolute temperature $T$ and the partition function $Z$,
\begin{equation}
 F = -T \ln Z, \quad \langle E \rangle = T^2 \frac{\partial \ln Z}{ \partial T},  \quad S =  \frac{\partial }{\partial T}(T\ln Z), \quad \frac{\partial S}{\partial T} = \frac{\sigma}{ T^3} \geq 0
\end{equation}
Notice that there is another interesting relation between the free energy $F$ and the thermal entropy $S$,  
\begin{equation}
   S = - \frac{\partial F}{\partial T}
\end{equation}
and $(T, S)$ is a pair of conjugate variables.
In addition, if the average energy vanishes, i.e., $\langle E \rangle = 0 $, then 
\begin{equation}
  S = \ln Z = -\beta F
\end{equation}

In quantum statistical mechanics, a statistical ensemble is represented by a density matrix $\rho$ generalizing the probability function $P$,
\begin{equation}
      \rho = e^{\beta(F -  H)} 
\end{equation}
where $H$ is the Hamiltonian of the canonical ensemble. By the normalization condition $\text{Tr}\, \rho  = 1$, 
one rewrites the above formula as
\begin{equation}
   e^{- \beta F} = \text{Tr} \,e^{- \beta H}
\end{equation}
Or equivalently,  if $Z= \text{Tr} \,e^{- \beta H}$ is  called the partition function as usual, then
\begin{equation}
    F = - \frac{1}{\beta}\ln Z
\end{equation}
Now the von-Neumann entropy is defined as
\begin{equation}
   S = - \text{Tr}\, \rho \ln \rho = - \langle \ln \rho \rangle
\end{equation}

From the relation between quantum statistical mechanics and quantum field theory, the partition function $Z_\beta = Tr (e^{-\beta H})$ will be
effectively identified with the partition function $Z = \int d\phi exp\{ \langle \phi, \Delta \phi\rangle \}$ from the path integral formalism for carefully chosen Hamiltonian $H$ and Laplacian $\Delta$. 
In this case, the inverse temperature $\beta$ will be related to the energy scale by $\beta \sim \Lambda^{-2}$.

\subsection{C-function}
In this subsection, we briefly review the known results about  C-function in low dimensions and  link it to our idea about the thermal entropy and the free energy.
The C-function associated with the RG flow can be viewed as a measure of entropy, which is monotonically decreasing along the evolution.

Ideally, one would like to construct a real-valued positive  function $C(g, \Lambda)$, called the $C$-function,
which  depends on the coupling constant $g$  and the energy scale $\Lambda$ of an effective QFT.
Let $C(g, \Lambda)$ vary along the RG flow; there are two desired properties: 1)  $C(g, \Lambda)$
 decreases monotonically from the UV to IR regime and 2) $C(g, \Lambda)$ is stationary (as a constant independent of $\Lambda$) 
 at the fixed points of the RG flow, i.e., at some conformal field theories (CFTs). 
The first milestone in this direction is Zamolodchikov's $c$-theorem in two dimensions \cite{Z86}. More precisely,
for 2d QFTs, there exists such a $C$-function so that it equals the central charge of the corresponding CFT at some fixed point of the RG flow.
Furthermore, the irreversibility of the RG flow manifests itself in the relation,
\begin{equation}
   c_{UV} > c_{IR}
\end{equation}
Roughly speaking, such formula means high energy theories have more information than theories at low energies.
In essence, the irreversibility of the RG flow is equivalent to the second law of thermaldynamics,
that is, entropy must increase.  

In even dimensions, there is a conformal anomaly (also called trace anomaly or Weyl anomaly, for a review see \cite{D94}), that is,
the trace of the energy-momentum tensor, i.e., $T_\mu^\mu$, is nonzero due to the conformal noninvariance after quantization.
The conformal anomaly can be computed by one-loop approximation and zeta function regularization in terms of heat kernel coefficients,
\begin{equation}
   T^\mu_\mu(x) = a_n(x, D)
\end{equation}
for a Laplace-type operator $D$ \cite{V03}.

In particular,  the conformal anomaly in 2d can be locally expressed as
\begin{equation}
  T^\mu_\mu(x) = a_2(x, D) = \frac{1}{24\pi}R 
\end{equation}
While in a 2d CFT, it is usually written as 
\begin{equation}
   \langle T_\mu^\mu \rangle = - \frac{c}{24\pi}R 
\end{equation}
where $c$ is the central charge. 
The irreversibility of the RG flow, i.e., $ c_{UV} > c_{IR} $, was proved based on computations involving the energy-momentum tensor.
Later it was also interpreted as the entanglement entropy of an interval, since the central charge also appears in the von-Neumann entropy.

Next in 4d, Cardy first  suggested to define the C-function as the conformal anomaly \cite{C88}, whose local form is given by
\begin{equation}
   T^\mu_\mu(x) = a_4(x, D) =  \frac{1}{2880\pi^2 } (R^{ijkl}R_{ijkl} - R^{ij}R_{ij}  -30\Delta R+5R^2 )
\end{equation}
In CFT, it is always written as
\begin{equation}
 \langle T_\mu^\mu \rangle = -\frac{1}{16\pi^2} (aE_4 - cW^2)
\end{equation}
where $E_4$ is the Euler density and $W^2$ is the Weyl tensor squared,
\begin{equation}
   E_4 = R_{\mu\nu\rho\sigma}^2 - 4 R_{\mu\nu}^2 + R^2, \quad W^2_{\mu\nu\rho\sigma}= R_{\mu\nu\rho\sigma}^2 - 2 R_{\mu\nu}^2 + \frac{1}{3}R^2 
\end{equation}
Again, $a$ and $c$ are two central charges in CFT. For conformally flat spaces such as the four-sphere, one has $W^2 = 0$, so  
in this case the conformal anomaly is only depending on the central charge $a$,
\begin{equation}
   \langle T_\mu^\mu \rangle = -\frac{a}{16\pi^2} E_4 
\end{equation}
As an analogy of the $c$-theorem in 2d, there is a so-called $a$-theorem in 4d \cite{KS11}, that is, along the RG flow,
\begin{equation}
   a_{UV} > a_{IR}
\end{equation}

In odd dimensions, there is no conformal anomaly, and one cannot construct a geometric invariant from the Riemann curvature tensor and its derivatives. 
However, under the holographic RG flow, an $F$-theorem was proved in 3d \cite{CHM11}, that is, along the RG flow,
\begin{equation}
   F_{UV} > F_{IR}
\end{equation} 
where $F$ is the free energy of the three sphere,
\begin{equation}
    F = -\frac{1}{\beta} \ln Z_{S^3} = -T  \ln Z_{S^3}
\end{equation}
Later on $F$ was also interpreted as the (renormalized) entanglement entropy in \cite{LM13}. 

The formulation of the 2d c-theorem and 4d a-theorem agrees with our observation that the top heat coefficient (or the conformal anomaly) reflects the variation of entropy (stated in Corollary \ref{VarEntrp}).
The existence of the 3d F-theorem teaches us the free energy plays a similar role as entropy in odd dimensions, which is the key assumption behind our treatment of the odd-dimensional case in the next subsection.

\subsection{Holographic principle}

In the study of renormalization group (RG) flow, the holographic principle is applied to  the holographic RG flow \cite{S02} in the AdS/CFT correspondence \cite{M98, GKP98, W98}.
In addition, entanglement entropy is used to probe the RG flow in the theory space of quantum field theories (QFTs) \cite{CHM11}, note that 
entanglement entropy is the same as thermal entropy for CFTs.
In this section, we will conjecture a formula for the variation of eigenvalues of the bulk Laplacian in odd dimensions 
through the holographic renormalization  in the geometric AdS/CFT correspondence \cite{A05, PK05}.

The AdS/CFT correspondence,  e.g., $AdS_5/CFT_4$,  states that a  gravity theory on an $(n + 1)$-dimensional AdS bulk spacetime is equivalent to a conformal field theory
at the $n$-dimensional boundary of the AdS spacetime. The holographic RG \cite{DVV00, S02} is based on the idea that the radial coordinate of the bulk
with  AdS geometry can be identified with the renormalization parameter in the RG flow  of the boundary field theory.

The geometric setup of the AdS/CFT correspondence is the following,  see \cite{A05} for more details. Let $\bar{X}$ be a   compact   $(n+1)$-manifold, $X$ its interior and $\partial X$ its boundary.
$(X, G)$ models the bulk geometry and the complete Riemannian metric $G$ is assumed to be conformally compact, that is, there is a defining function $\rho$ on $\bar{X}$ such that the conformally equivalent metric
$\tilde{G} = \rho^2G$ extends to a metric on the compactification $\bar{X}$. A defining function $\rho$ is a smooth, non-negative function on $\bar{X}$ with $\rho^{-1}(0) = \partial X$ and $d\rho \neq 0$ on $\partial X$.
The induced metric $g = \tilde{G}|_{\partial X}$ is the boundary metric associated to $\tilde{G}$, and $(\partial X, g) = (M, g)$ models the boundary geometry, which is assumed to be an $n$-dimensional closed manifold. 
There are many possible defining functions, and the conformal compactification $\tilde{G}$ is not unique,  but the conformal class $[g]$ of $g$ is uniquely determined by $(X, G)$, and $(M, [g])$ is called the conformal boundary. 
In  the AdS/CFT correspondence, the bulk geometry $(X, G)$ describes a gravity theory, so we assume $X$ is a (negative) Einstein manifold and $G$ is always taken as a conformally compact Einstein metric. 
From now on, we also assume the bulk geometry $(X, G)$ is $(2k+1)$-dimensional, and the conformal boundary $(M, [g])$ is $2k$-dimensional, i.e., $n= 2k$. 

The Gauss Lemma gives the splitting of the compactification $\tilde{G}$, 
\begin{equation}
   \tilde {G} = d\rho^2 + g_\rho
\end{equation}
where $\rho$  is the unique geodesic defining function determined by the boundary metric $g$,  $g_\rho$ has a Taylor-type series expansion, 
\begin{equation}
  g_\rho \sim g + \rho^2 g_{(2)} + \cdots + \rho^{2k}g_{(2k) } +   \rho^{2k} \ln \rho \, h + \rho^{2k+1}  g_{(2k+1) } + \cdots 
\end{equation}
here $h$ is determined by the conformal anomaly of the boundary CFT.
When the radial coordinate $\rho$ (also called the holographic parameter) tends to $0$, i.e., at the boundary $M$, $g_\rho$ is  equivalent to the boundary metric $g$, 
\begin{equation}
   g_\rho (\rho \rightarrow 0) \sim  g 
\end{equation}
If one truncates the  bulk at an infinitesimal scale  $\rho = \varepsilon$ , then by the holographic renormalization \cite{DVV00, S02}, 
$\rho$ corresponds to the  energy scale $\Lambda \gg 0$ in the RG flow of the boundary QFTs, i.e., the IR/UV connection $\rho \leftrightarrow \Lambda^{-1}$.

\begin{prop} In the geometric setup of the AdS/CFT correspondence, through the holographic RG flow  the free energy of the bulk changes  as 
   \begin{equation}
      \frac{\partial F}{ \partial \rho } =  \frac{1}{2}\zeta'(0, D) -\rho \zeta(0, D) 
   \end{equation} 
   if the zeta function regularization is used, where $D$ is a  Laplace-type operator on the boundary $(M,g)$. 
\end{prop}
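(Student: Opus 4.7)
The plan is to use the holographic principle to translate the bulk free energy into a boundary computation involving the zeta function of $D$, then read off the $\rho$-dependence by differentiation.

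First, I would exploit the IR/UV dictionary $\rho \leftrightarrow \Lambda^{-1}$ together with the Fefferman--Graham splitting $\tilde{G} = d\rho^2 + g_\rho$ to reduce the one-loop bulk partition function, cut off at $\rho = \varepsilon$, to a boundary object. Near conformal infinity $g_\rho \to g$, so at the one-loop level the bulk path integral is controlled by a Laplace-type operator $D$ on the conformal boundary $(M, g)$, and the bulk effective action is encoded in $\ln\det(D)$.

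Second, I would apply the zeta function regularization from Section~\ref{Polact} to this boundary log-determinant, using the renormalized effective action formula~\eqref{RenEffAct}, namely $W_{ren}(\mu) = -\frac{1}{2}\zeta'(0, D) - \frac{1}{2}\ln(\mu^2)\zeta(0, D)$. Combining this with the thermodynamic relation $F = -\beta^{-1}\ln Z$ from Section~4.1 and the scale matches $\beta \sim \Lambda^{-2} \sim \rho^2$ and $\mu \sim \rho$, the bulk free energy should acquire the polynomial form
\begin{equation}
F(\rho) = \frac{\rho}{2}\zeta'(0, D) - \frac{\rho^2}{2}\zeta(0, D) + C(D),
\end{equation}
where $C(D)$ is a $\rho$-independent piece removed by the holographic counterterms. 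Differentiating in $\rho$ then yields the stated formula.

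The hard part will be pinning down the scale identifications so that a term linear in $\rho$ carries the $\frac{1}{2}\zeta'(0, D)$ coefficient while a quadratic-in-$\rho$ term carries the $-\zeta(0, D)$ coefficient. In practice this amounts to fixing the relation between the boundary inverse temperature $\beta$ and the radial cutoff $\rho$, and choosing the holographic counterterms compatibly with the conventions of~\cite{DVV00, S02}. Once those conventions are fixed, the remaining argument is a direct application of the zeta regularization identities already established in Section~\ref{Polact} and the renormalization formula~\eqref{logdet} from the proof of the even-dimensional theorem.
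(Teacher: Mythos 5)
Your overall architecture matches the paper's: identify $Z_{bulk}$ with $Z_{CFT}$ via the holographic principle, compute $\ln Z_{CFT}$ as a zeta-regularized one-loop determinant of the boundary operator $D$, insert this into a thermodynamic relation $F = -T\ln Z$, and differentiate with respect to the holographic parameter. But the step you defer as ``the hard part'' --- pinning down the scale identifications --- is exactly where the content of the proposition lives, and the identifications you actually commit to do not produce the stated formula. Taking $F = -\beta^{-1}\ln Z$ with $\beta \sim \Lambda^{-2} \sim \rho^2$ and $\mu \sim \rho$ in $W_{ren}$ gives
\begin{equation*}
F(\rho) \sim -\rho^{-2}\left[\tfrac{1}{2}\zeta'(0,D) + \ln\rho\,\zeta(0,D)\right],
\end{equation*}
which is not the polynomial $\frac{\rho}{2}\zeta'(0,D) - \frac{\rho^2}{2}\zeta(0,D)$ you assert, and whose $\rho$-derivative is not $\frac{1}{2}\zeta'(0,D) - \rho\,\zeta(0,D)$.

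The paper's proof rests on two different choices. First, the temperature prefactor in $F = -T\ln Z$ is identified with $\tau = \ln\Lambda^2$ --- linearly in the \emph{logarithm} of the energy scale, justified only by ``$T$ has units in energy'' --- rather than with $\beta^{-1}\sim\Lambda^{2}$. Second, the renormalization scale in the log-determinant is taken to be $\Lambda$ itself, so $\ln Z_{CFT} = \frac{1}{2}[\zeta'(0,D) + \tau\,\zeta(0,D)]$. These give $F = -\frac{1}{2}[\tau\,\zeta'(0,D) + \tau^2\zeta(0,D)]$ exactly (no $\rho$-independent counterterm $C(D)$ appears), and the final substitution is $\rho := -\tau = -2\ln\Lambda$, i.e., the ``holographic parameter'' in the proposition is the negative logarithm of the squared energy scale, not the Fefferman--Graham radial coordinate $\rho\sim\Lambda^{-1}$ that you (reasonably, following the earlier IR/UV discussion) adopted. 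So the gap is concrete: with your dictionary the computation yields a different answer, and without specifying the paper's dictionary your polynomial ansatz for $F(\rho)$ is unsupported; you would need to replace $\beta\sim\rho^2$ by $T=\ln\Lambda^2$ and set $\rho=-\ln\Lambda^2$ before the differentiation step goes through.
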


\begin{proof}
  Under the AdS/CFT correspondence, we consider the bulk-boundary duality. The basic assumption of the holographic principle
  is that the partition functions from the bulk and boundary are the same 
  \begin{equation}
         Z_{CFT} = Z_{bulk}
  \end{equation}
  where   $Z_{CFT} =  Z_{CFT} ([g])$ only depends on the conformal class of the  boundary metric  $[g]$. 
  
  Hence the free energy of the odd-dimensional bulk theory is 
  \begin{equation}
       F = -T \ln Z_{bulk} = -T \ln Z_{CFT} = -\tau \ln Z_{CFT}
  \end{equation}
  where the absolute temperature $T$ is identified with the parameter $\tau = \ln \Lambda^2$ in the RG flow
  since $T$ has units  in energy. 
  Using the zeta function regularization, the boundary partition function is given by the renormalized one-loop effective action,
  \begin{equation}
        \ln Z_{CFT} = -\frac{1}{2} \ln \det D = \frac{1}{2}[\zeta'(0, D) + \tau \zeta(0, D)]
  \end{equation}
  So the free energy of the bulk is written as
  \begin{equation}
       F =  -\frac{1}{2} [\tau \zeta'(0, D) + \tau^2 \zeta(0, D)]
  \end{equation}
  Hence the derivative with respect to $ \tau$ is
  \begin{equation}
       \frac{\partial F}{ \partial \tau} = - \frac{1}{2}\zeta'(0, D) - \tau \zeta(0, D)  
  \end{equation}
  If we introduce a new parameter $\rho = -\tau = -2 \ln \Lambda $, then the   variation of $F$ along the RG flow is given by
  \begin{equation}
       \frac{\partial F}{ \partial \rho} =  \frac{1}{2}\zeta'(0, D) - \rho \zeta(0, D)  
  \end{equation}
 where $\rho$ is viewed as the holographic parameter.
\end{proof}

   If the thermal entropy $S$ is proportional to the minus free energy $-F$ in odd dimensions, 
   \begin{equation}
                  S \propto - F
   \end{equation}
   then the variation of entropy of the bulk through the holographic RG flow is 
   \begin{equation}
               \frac{\partial S}{ \partial \rho} \propto - \frac{1}{2}\zeta'(0, D)  + \rho \zeta(0, D)
   \end{equation}
   
   Based on the bulk geometry $(X, G)$ with boundary $\partial X = M$, the Einstein--Hilbert action with cosmological constant is considered as a classical gravity theory, 
   the Hamiltonian approach and Hamilton--Jacobi equations have been investigated in the literature, for example see \cite{DVV00, W98}.  
   One can also consider the nonlinear sigma model of bosonic strings with B-fields on $(X, G)$, with a carefully chosen boundary condition $\mathcal{B}$
   the Lagrangian can be converted to a Laplace-type operator $\tilde{D}(G, \mathcal{B})$ \cite{V03}. For simplicity, here we only consider a scalar Laplacian $\tilde{D}(G, \mathcal{B})$   
   (of a scalar field theory) under the Dirichlet or modified Neumann boundary condition $\mathcal{B}$ \cite{V03}, a simple example is given by the free massive scalar field in the AdS geometry with
   \begin{equation}
    \tilde{D} = -\Delta_{G} + m^2
   \end{equation}
   where $\Delta_G$ is the Laplace--Beltrami operator with respect to $G$ \cite{S02}.
   
   If we consider eigenvalues of the Laplace--Beltrami operator $\tilde{D} = -\Delta_{G}$ representing the thermal entropy or the minus free energy of the bulk in  the geometric AdS/CFT correspondence,
   through the holographic renormalization,  we have the following conjecture about the variation of eigenvalues of  $\tilde{D}$  in odd-dimensional bulk.  
\begin{conj}
   In the geometric setup of the AdS/CFT correspondence, if the bulk Laplacian  $\tilde{D} = - \Delta_G$  and the boundary Laplacian $D = -\Delta_g$ are the Laplace--Beltrami operators,  
   then through the holographic renormalization, 
   the variation of eigenvalues of $\tilde{D}$  is
   \begin{equation}
               \frac{\partial \lambda}{ \partial \rho} = - \frac{1}{2}\zeta'(0, D)  +\rho \zeta(0, D)
   \end{equation}
   In particular, if we only approximate the above formula to the first term, then the variation of eigenvalues is
   \begin{equation}
      \frac{\partial \lambda}{ \partial \rho} = - \frac{1}{2}\zeta'(0, D)
   \end{equation}
\end{conj}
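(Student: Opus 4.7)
The plan is to adapt the argument used for the even-dimensional theorem, but to route the computation through the AdS/CFT dictionary because odd-dimensional closed manifolds carry no conformal anomaly and the relevant top heat kernel coefficient vanishes. The role played by $a_n(g,D)$ in the even case must therefore be replaced by the free energy $F$ of the boundary CFT, consistent with the motivation supplied by the 3d $F$-theorem.

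First, I would apply the holographic identification $Z_{\text{bulk}} = Z_{CFT}$ together with zeta-function regularization of the boundary Laplacian $D = -\Delta_g$, so that the renormalized boundary one-loop effective action gives
\begin{equation}
\ln Z_{CFT} = \tfrac{1}{2}\bigl[\zeta'(0,D) + \tau \zeta(0,D)\bigr], \qquad \tau = \ln \Lambda^2.
\end{equation}
Second, I would invoke the preceding Proposition, which already provides
\begin{equation}
\frac{\partial F}{\partial \rho} = \tfrac{1}{2}\zeta'(0,D) - \rho\, \zeta(0,D),
\end{equation}
after writing $\rho = -\tau$ as the holographic parameter via the IR/UV connection $\rho \leftrightarrow \Lambda^{-1}$.

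Third, I would use the relation $S \propto -F$ that holds in the regime $\langle E\rangle = 0$ recalled in the review of thermal entropy, to translate the free-energy variation into an entropy variation $\frac{\partial S}{\partial \rho}\propto -\tfrac{1}{2}\zeta'(0,D)+\rho\,\zeta(0,D)$. Fourth, I would transfer this identity to spectral data of the bulk operator $\tilde{D}=-\Delta_G$ by invoking the analogue, across the holographic duality, of the even-dimensional Corollary which identifies the variation of eigenvalues with the variation of thermal entropy. Concretely, one reads off eigenvalues of $\tilde{D}$ as representatives of the bulk entropy/minus free energy, and the conjectured formula follows; truncating at the leading term gives the stated first-order approximation $\partial_\rho \lambda = -\tfrac{1}{2}\zeta'(0,D)$.

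The hardest step is the last one: establishing that variations of bulk Laplacian eigenvalues are really dual, under holographic renormalization, to the boundary zeta-regularized data $\zeta'(0,D)$ and $\zeta(0,D)$, and not just to an abstract entropy. This is why the result is stated as a conjecture rather than a theorem. A rigorous attack would probably require a detailed analysis of the Fefferman-Graham expansion $g_\rho \sim g + \rho^2 g_{(2)}+\cdots$ together with the scattering/Poisson operator relating bulk spectral resolutions on a conformally compact Einstein $(X,G)$ to conformal invariants of the boundary $(M,[g])$, so that derivatives with respect to $\rho$ translate into variations of the boundary zeta function; the subtle point is that the logarithmic term $\rho^{2k}\ln\rho\, h$ in $g_\rho$ (which is absent in odd boundary dimension but whose odd-dimensional analogue encodes the absence of anomaly) must not obstruct this identification.
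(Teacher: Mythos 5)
Your proposal follows essentially the same route as the paper: the holographic identification $Z_{\text{bulk}}=Z_{CFT}$ with zeta-regularized boundary effective action, the Proposition for $\partial F/\partial\rho$, the heuristic $S\propto -F$, and the final (avowedly conjectural) step of reading bulk Laplacian eigenvalues as representatives of the bulk entropy/minus free energy. You also correctly locate the unproved step --- the duality between bulk spectral variations and boundary zeta data --- which is precisely why the paper states this as a conjecture rather than a theorem.
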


\section{Discussion} \label{Disc}
   The irreversibility of the renormalization group (RG) flow is closely related to the monotonicity of entropy.
   And the fixed points of the RG flow, i.e., the corresponding conformal field theories (CFTs), play an important role
   in understanding such irreversibility.
   The eigenvalues derived from entropy functionals inherit the monotonicity of entropy. 
   From the one-loop approximation of a quantum field theory (QFT), 
   the effective partition function can be computed based on  heat kernel expansion and zeta function regularization.
   However, a general heat kernel coefficient does not have the monotonicity, neither do the eigenvalues. 
   In order to establish the monotonicity, one has to decompose a heat
   kernel coefficient into different components. For example, it is possible to find the monotonicity from the Euler density in 4d. 
   
   In odd dimensions, 
   there is no conformal anomaly, so it is only possible to find the monotonicity based on the holographic principle.
   We have learned something interesting from the $F$-theorem in 3d. 
   The free energy in odd dimensions plays the same role as the partition function in even dimensions. 
   There are too many mysterious things that we do not understand in the holographic principle and the AdS/CFT correspondence, 
   so we end this paper with a conjecture.
   We conjectured a formula for the variation of eigenvalues of the odd-dimensional bulk Laplacian in terms of the zeta function of the boundary Laplacian.

   In this paper, we are mainly interested in the Polyakov action \eqref{SigmaMod} and the relevant Laplacian \eqref{LapOp}. 
   It is straightforward to generalize the Polyakov action to include a dilaton field, and it may take more effects to include $B$-fields and other fields in string theory. 
   It is also interesting to apply a similar strategy to study spinor fields or gauge fields evolving along the RG flow.  
   
   We also hope to generalize this work into the noncommutative world. 
   In noncommutative geometry, a Dirac operator in a spectral triple plays the role of a metric in Riemannian geometry. 
   Instead of deformations of a Dirac operator, it is easier to consider  deformations of eigenvalues of the Dirac Laplacian 
   under some geometric flow or the RG flow. 
   There are a lot of interesting things one could try along this direction in noncommutative geometry.


\nocite{*}
\bibliographystyle{plain}
\bibliography{RGF}

\begin{thebibliography}{10}

\bibitem{A05}
M.T. Anderson.
\newblock Geometrical aspects of the {A}d{S}/{CFT} correspondence.
\newblock In O.~Biquard, editor, {\em AdS/CFT Correspondence: Einstein Metrics
  and their Conformal Boundaries}, IRMA Lectures in Mathematics and Theoretical
  Physics 8, pages 1--31. EMS, 2005.

\bibitem{C08}
X.D. Cao.
\newblock First eigenvalues of geometric operators under the {R}icci flow.
\newblock {\em Proc. Am. Math. Soc.}, 136(11):4075--4078, 2008.

\bibitem{C88}
J.L. Cardy.
\newblock Is there a c-theorem in four dimensions?
\newblock {\em Phys. Lett. B}, 215(4):749--752, 1988.

\bibitem{C10}
M.~Carfora.
\newblock Renormalization group and the {R}icci flow.
\newblock {\em Milan J. Math.}, 78:319--353, 2010.

\bibitem{CHM11}
H.~Casini, M.~Huerta, and R.~C. Myers.
\newblock Towards a derivation of holographic entanglement entropy.
\newblock {\em JHEP}, 05:36, 2011.

\bibitem{DVV00}
J.~{de Boer}, E.~Verlinde, and H.~Verlinde.
\newblock On the holographic renormalization group.
\newblock {\em JHEP}, 08:03, 2000.

\bibitem{D07}
L.F. {Di Cerbo}.
\newblock Eigenvalues of the {L}aplacian under the {R}icci flow.
\newblock {\em Rend. Mat.}, 27:183--195, 2007.

\bibitem{D94}
M.J. Duff.
\newblock Twenty years of the {W}eyl anomaly.
\newblock {\em Class. Quant. Grav.}, 11:1387, 1994.

\bibitem{F80}
D.~Friedan.
\newblock Nonlinear models in 2 + $\epsilon$ dimensions.
\newblock {\em Phys. Rev. Lett.}, 45:1057, 1980.

\bibitem{F85}
D.~Friedan.
\newblock Nonlinear models in 2 + $\epsilon$ dimensions.
\newblock {\em Ann. Phys.}, 163:318--419, 1985.

\bibitem{G00}
K.~Gawedzki.
\newblock Lectures on conformal field theory.
\newblock In P.~Deligne, editor, {\em Quantum fields and strings: a course for
  mathematicians}, pages 727--805. AMS, 2000.

\bibitem{GKP98}
S.S. Gubser, I.R. Klebanov, and A.~M. Polyakov.
\newblock Gauge theory correlators from non-critical string theory.
\newblock {\em Phys. Lett. B}, 428:105--114, 1998.

\bibitem{H77}
S.W. Hawking.
\newblock Zeta function regularization of path integrals in curved spacetime.
\newblock {\em Commun. Math. Phys.}, 55(2):133--148, 1977.

\bibitem{KS11}
Z.~Komargodski and A.~Schwimmer.
\newblock On renormalization group flows in four dimensions.
\newblock {\em JHEP}, 12:99, 2011.

\bibitem{L07}
J.F. Li.
\newblock Eigenvalues and energy functionals with monotonicity formulae under
  {R}icci flow.
\newblock {\em Math. Ann.}, 338(4):927--946, 2007.

\bibitem{L12}
X.D. Li.
\newblock From the {B}oltzmann {H}-theorem to {P}erelman's {W}-entropy formula
  for the {R}icci flow.
\newblock In H.~Chen, Y.M. Long, and Y.~Nishiura, editors, {\em Emerging Topics
  on Differential Equations and Their Applications}, pages 68--74. World
  Scientific, 2012.

\bibitem{LM13}
H.~Liu and M.~Mezei.
\newblock A refinement of entanglement entropy and the number of degrees of
  freedom.
\newblock {\em JHEP}, 04:162, 2013.

\bibitem{M06}
Li~Ma.
\newblock Eigenvalue monotonicity for the {R}icci-{H}amilton flow.
\newblock {\em Ann. Glob. Anal. Geom.}, 29(3):287--292, 2006.

\bibitem{M98}
J.~Maldacena.
\newblock The large {N} limit of superconformal field theories and
  supergravity.
\newblock {\em Adv. Theor. Math. Phys.}, 2:231, 1998.

\bibitem{MT07}
J.~Morgan and G.~Tian.
\newblock {\em Ricci flow and the Poincar{\'e} conjecture}.
\newblock Clay Mathematics Monographs. American Mathematical Society, 2007.

\bibitem{N15}
Y.~Nakayama.
\newblock Scale invariance vs conformal invariance.
\newblock {\em Physics Reports}, 569:1--93, 2015.

\bibitem{PK05}
I.~Papadimitriou and K.~Skenderis.
\newblock Ad{S}/{CFT} correspondence and geometry.
\newblock In O.~Biquard, editor, {\em AdS/CFT Correspondence: Einstein Metrics
  and their Conformal Boundaries}, IRMA Lectures in Mathematics and Theoretical
  Physics 8, pages 73--101. EMS, 2005.

\bibitem{P02}
G.~Perelman.
\newblock The entropy formula for the {R}icci flow and its geometric
  applications.
\newblock 2002.
\newblock arXiv: 0211159.

\bibitem{S02}
K.~Skenderis.
\newblock Lecture notes on holographic renormalization.
\newblock {\em Class. Quant. Grav.}, 19:5849--5876, 2002.

\bibitem{V03}
D.V. Vassilevich.
\newblock Heat kernel expansion: User's manual.
\newblock {\em Phys. Rep.}, 388:279--360, 2003.

\bibitem{W98}
E.~Witten.
\newblock Anti-de {S}itter space and holography.
\newblock {\em Adv. Theor. Math. Phys.}, 2:253--291, 1998.

\bibitem{Z86}
A.~Zamolodchikov.
\newblock Irreversibility of the flux of the renormalization group in a 2{D}
  field theory.
\newblock {\em JETP Lett.}, 43:730--732, 1986.

\end{thebibliography}

\end{document}